\numberwithin{equation}{section}
\newtheorem{thm}{Theorem}[section]
\newtheorem{lem}[thm]{Lemma}
\newtheorem{rem}[thm]{Remark}
\newtheorem{prop}[thm]{Proposition}
\newtheorem{asm}[thm]{Assumption}
\begin{document}
\title{A Note on Utility Maximization with Proportional Transaction Costs and Stability of Optimal Portfolios}
\thanks{ E. Bayraktar is partially supported by the National Science Foundation under grant  DMS-2106556 and by the Susan M. Smith chair.
Y. Dolinsky is supported in part by the GIF Grant 1489-304.6/2019 and the ISF grant
230/21.
}

\author{Erhan Bayraktar \address{
             Department of Mathematics, University of Michigan.\\
             email: erhan@umich.edu}}
\author{Christoph Czichowsky
\address{Department of Mathematics, London School of Economics and Political Science.\\
email: C.Czichowsky@lse.ac.uk}}

\author{Leonid Dolinskyi \address{
             Department of Finance, National University of Kyiv-Mohyla Academy.\\
             email: dolinsky.com@gmail.com}}
\author{Yan Dolinsky \address{
 Department of Statistics, Hebrew University of Jerusalem.\\
 e.mail: yan.dolinsky@mail.huji.ac.il}}

 \date{\today}
\maketitle \markboth{}{}

\begin{abstract}
The aim of this short note is to
establish a limit theorem for the optimal trading strategies
in the setup of the utility maximization problem with proportional transaction costs.
This limit theorem resolves the open question
from \cite{BDD}. The main idea of our proof is to establish a uniqueness result for the optimal strategy.
The proof of the uniqueness is heavily based on
the dual approach which was developed recently in \cite{CS,CSY,CPSY}.
\end{abstract}
\begin{description}
\item[Mathematical Subject Classification (2010)] 91B16, 91G10
\item[Keywords] Utility Maximization, Proportional Transaction Costs, Shadow Price Process.
\end{description}

\section{Preliminaries and the Limit Theorem}\label{sec:1}
\subsection{Utility Maximization with Proportional Transaction Costs}
We consider a model with one risky asset which we denote by
$S=(S_t)_{0\leq t\leq T}$, where $T<\infty$ is a fixed finite time horizon.
We assume that the investor has a bank account that,
for simplicity, bears no interest. The process $S$ is assumed to be an adapted,
strictly positive and continuous process (not necessarily a semi-martingale) defined on a filtered probability
space $(\Omega,\mathbb F,(\mathcal F_t)_{0\leq t\leq T},\mathbb P)$ where
the filtration $\mathcal F:=(\mathcal F_t)_{0\leq t\leq T}$
satisfies the usual assumptions (right continuity and completeness).

Let $\kappa\in (0,1)$ be a constant. Consider a model in which every purchase or sale
of the risky asset at time $t\in [0,T]$ is subject to a proportional transaction
cost of rate $\kappa$.
A trading strategy is an adapted process
$\gamma=(\gamma_t)_{0\leq t\leq T}$
of
bounded variation with right-continuous paths;
note that it automatically has left limits and hence is
RCLL (right-continuous with left limits).
The random variable $\gamma_t$ denotes
the number of shares held at time $t$.
We use the convention $\gamma_{0-}=0$. Moreover, we require that $\gamma_T=0$ which means that we
liquidate the portfolio at the maturity date.

Let $\gamma_t:=\gamma^{+}_t-\gamma^{-}_t$, $t\in [0,T]$
be the Jordan decomposition into two non-decreasing processes
$(\gamma^{+}_t)_{0\leq t\leq T}$ and $(\gamma^{-}_t)_{0\leq t\leq T}$  describing the positive variation and negative variation, respectively.
Because the bid price process is $(1-\kappa) S$ and the ask price process is $(1+\kappa) S$, the liquidation value of a trading strategy
$\gamma$ at time $t$
is given by
$$V^{\gamma}_t:=(1-\kappa)\int_{0}^t S_ud\gamma^{-}_u-(1+\kappa) \int_{0}^t S_ud\gamma^{+}_u+(1-\kappa) S_t(\gamma_t)^{+}-(1+\kappa)S_t(\gamma_t)^{-}
$$
where
$(\gamma_t)^{+}:=\max(0,\gamma_t)$ and
$(\gamma_t)^{-}:=\max(0,-\gamma_t)$ (beware that these are not the
same variables as $\gamma^{+}_t,\gamma^{-}_t$ above).
Note that the integrals take into account the possible transaction at $t=0$. Namely, we define
$$\int_{0}^t S_ud\gamma^{-}_u:=S_0\gamma^{-}_0+\int_{(0,t]} S_ud\gamma^{-}_u \ \ \mbox{and} \ \
\int_{0}^t S_ud\gamma^{+}_u:=S_0\gamma^{+}_0+\int_{(0,t]} S_ud\gamma^{+}_u.$$
By rearranging the terms, we get
\begin{equation}\label{2.0}
V^{\gamma}_t=
\gamma_t S_t-\int_{0}^t S_u d\gamma_u-\kappa |\gamma_t|S_t-\kappa\int_{0}^t S_u|d\gamma_u|, \ \ t\in [0,T].
\end{equation}

Observe that the wealth process $(V^{\gamma}_t)_{0\leq t\leq T}$
is RCLL like $\gamma$ and $\gamma_T=0$ implies
$V^{\gamma}_{T-}=V^{\gamma}_T$.
For any initial capital $x>0$, we denote by $\mathcal A(x)$
the set of all trading strategies $\gamma$ which satisfy the admissibility condition
$x+V^{\gamma}_t\geq 0$, for all $t\in [0,T]$.

We will assume that the process $S$ is sticky (Definition 2.2 in \cite{G}) and satisfies a slight strengthening of
the condition of
“two-way crossing” (TWC) (Definition 3.1 in \cite{B:2012}).
For completeness, we formulate the assumptions explicitly.
\begin{asm}\label{asm1}
The process $S$ is sticky with respect to the filtration $\mathcal F$. That is, for any
$\delta>0$ and
a stopping time $\tau\leq T$ (with respect to $\mathcal F$) such that $\mathbb P(\tau<T)>0$, we have
$$\mathbb P\left(\sup_{\tau\leq u\leq T}|S_u-S_{\tau}|<\delta, \tau<T\right)>0.$$
\end{asm}
\begin{asm}\label{asm2}
The process $S$ satisfies the (TWC) property with respect to the filtration $\mathcal F$, if, for any stopping time $\sigma\leq T$, we have
$$\inf\{t>\sigma: S_t>S_{\sigma}\}=\inf\{t>\sigma: S_t<S_{\sigma}\}=\sigma \ \ \mbox{a.s.}$$
\end{asm}
\begin{rem}
Let us remark that Assumptions \ref{asm1}--\ref{asm2} hold true for reasonable semi--martingale models
and important non semi--martingale models such as the exponential fraction Brownian motion (see
\cite{G} for Assumption \ref{asm1} and \cite{B:2012,P} for Assumption \ref{asm2}).
Moreover, in \cite{G} the author proved that Assumption \ref{asm1}
implies the absence of arbitrage with the presence of proportional transaction costs, and so this is a quite natural assumption.
Assumption \ref{asm2} is more technical and its financial interpretation
is linked to arbitrage opportunities with simple strategies
in a frictionless setup (for details see \cite{B:2012}).
\end{rem}
Next, we introduce our utility maximization problem.
Let $U:(0,\infty)\rightarrow\mathbb R$ be an increasing, strictly concave,
continuously differentiable utility function, satisfying the Inada conditions
$U'(0)=\infty$ and $U'(\infty)=0$, as well as the condition of “reasonable asymptotic elasticity”
introduced in \cite{KS}
$$AE(U):=\lim\sup_{x\rightarrow\infty}  \frac{x U'(x)}{U(x)}<1.$$

For a given initial capital $x>0$, we consider the optimization problem
\begin{equation}\label{problem}
u(x):=\sup_{\gamma\in\mathcal A(x)}\mathbb E_{\mathbb P}[U(x+V^{\gamma}_T)].
\end{equation}

\subsection{Approximating Sequence of Models}
For any $n$,
let $S^{n}=(S^{n}_t)_{0\leq t\leq T}$ be a strictly positive, continuous process defined on some filtered probability space
$(\Omega^n,\mathbb F^{n},(\mathcal F^{n}_t)_{0\leq t\leq T},\mathbb P^n)$, where the filtration
$\mathcal F^{n}:=(\mathcal F^{n}_t)_{0\leq t\leq T}$ satisfies the usual assumptions .
For the $n$--th model, a trading strategy
is a right continuous adapted processes
$\gamma^{n}=(\gamma^{n}_t)_{0\leq t\leq T}$
of bounded variation satisfying $\gamma^{n}_T=0$. As before, we use the convention that
$\gamma^{n}_{0-}=0$. Similarly to (\ref{2.0}) the corresponding liquidation value is
given by
\begin{equation*}
V^{\gamma^{n}}_t:=\gamma^{n}_tS^{n}_t-\int_{0}^t S^{n}_ud\gamma^{n}_u-\kappa |\gamma^{n}_t|S^{n}_t -\kappa\int_{0}^t S^{n}_u|d\gamma^{n}_u|, \ \ t\in [0,T].
\end{equation*}
For any $x>0$ we denote by $\mathcal A^{n}(x)$ the set of all trading strategies $\gamma^{n}$ which satisfy
$x+V^{\gamma^{n}}_t\geq 0$, for all $t\in [0,T]$.
Set $$u^n(x):=\sup_{\gamma^{n}\in\mathcal A^{n}(x)}\mathbb E_{\mathbb P^n}\left[U\left(x+V^{\gamma^{n}}_T\right)\right].$$

As in \cite{BDD} we assume the following natural assumption.
\begin{asm}\label{asm2.2}
There exist $\varepsilon\in(0,\kappa)$
and probability measures $\mathbb Q\sim\mathbb P$, $\mathbb Q^n\sim\mathbb P^n$, $n\in\mathbb N$
with the following properties:
\begin{itemize}
\item[1)]There exists a local $\mathbb Q$--martingale $(M_t)_{0\leq t\leq T}$
and for any $n\in\mathbb N$ there exists a local $\mathbb Q_n$--martingale $(M^{n}_t)_{0\leq t\leq T}$ such that
$$
|M_t-S_t|\leq (\kappa-\varepsilon)S_t \ \ \mathbb P \ \mbox{a.s.}, \ \  \forall t\in [0,T]$$
and for any $n$
$$|M^{n}_t-S^{n}_t|\leq (\kappa-\varepsilon)S^{n}_t, \ \ \mathbb P^n \ \mbox{a.s.}, \ \ \forall t\in [0,T].$$
\item[2)] The sequence of probability measures $\mathbb P^n$, $n\in\mathbb N$, is
contiguous to the sequence $\mathbb Q^n$, $n\in\mathbb N$. Namely,
for any sequence of events $A^n\in\mathcal F^{n}$, $n\in\mathbb N$
if $\lim_{n\rightarrow\infty} \mathbb Q^n(A^n)=0$ then
$\lim_{n\rightarrow\infty} \mathbb P^n(A^n)=0$.
\end{itemize}
\end{asm}
\begin{rem}
Let us notice that condition 1) in  Assumption \ref{asm2.2}
is a priori a robust no-arbitrage condition
(for details see \cite{GLR}) and
condition 2) in Assumption \ref{asm2.2} can be viewed as
an asymptotic no-arbitrage
condition for large markets (for details see \cite{KLP}).
\end{rem}
Next, we formulate an assumption which guarantees uniform integrability.
\begin{asm}\label{asm2.3-}
One (or more) of the following conditions hold:
\begin{itemize}
\item[(i)] $U$ is bounded from above.\\
\item[(ii)]
There exist a constant $q>\frac{1}{1-AE(U)}$ and a sequence of pairs
$(\hat {\mathbb Q}^n,\hat M^{n})$, $n\in\mathbb N$ such that for any $n$,
$\hat {\mathbb Q}^n\sim \mathbb P^n$, $(\hat M^{n}_t)_{0\leq t\leq T}$ is a $\hat {\mathbb Q}^n$--local martingale,
for all $t\in [0,T]$ we have
$|\hat M^{n}_t-S^{n}_t|\leq \kappa S^{n}_t$ $\mathbb P^n$-a.s.
and
$$\sup_{n\in\mathbb N}\mathbb E_{{{\hat{\mathbb Q}}}^n}\left[\left(\frac{d{\mathbb P}^n}{d{\hat{\mathbb Q}}^n}\right)^{q}\right]<\infty.$$
\end{itemize}
\end{asm}
The verification of
the second condition in the above assumption
requires an explicit representation of
consistent price systems.
For the case where the market models are
semi--martingales defined on the Brownian probability space and satisfy some regularity
assumptions this condition holds true
(for details see Example 2.8 in \cite{BDD}).
\begin{lem}\label{prop1}
Assume that Assumption \ref{asm2.3-} holds true. Then,
for any $x>0$, the set $\left\{U^{+}
\left(x+V^{\gamma^{n}}_T\right)\right\}_{n\in\mathbb N,
\gamma^{n}\in\mathcal A^{n}(x)}$ is uniformly integrable, where
$U^{+}:=\max(U,0)$.
\end{lem}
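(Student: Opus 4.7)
Under condition (i) of Assumption \ref{asm2.3-} the function $U^{+}$ is bounded and the uniform integrability is immediate, so the real content lies in condition (ii). The plan is, first, to dominate the wealth $x+V^{\gamma^{n}}_{T}$ by a non--negative $\hat{\mathbb Q}^{n}$--supermartingale starting at $x$, and second, to turn this $L^{1}(\hat{\mathbb Q}^{n})$--bound into an $L^{r}(\mathbb P^{n})$--bound on $U^{+}(x+V^{\gamma^{n}}_{T})$ for some $r>1$, whence uniform integrability by de la Vallée--Poussin.

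For the first step, fix $n$ and $\gamma^{n}\in\mathcal A^{n}(x)$ and set
$N^{n}_{t}:=\int_{0}^{t}\gamma^{n}_{u-}d\hat M^{n}_{u}$. Since $\hat M^{n}$ is continuous, integration by parts gives $\gamma^{n}_{t}\hat M^{n}_{t}=N^{n}_{t}+\int_{0}^{t}\hat M^{n}_{u}d\gamma^{n}_{u}$. Subtracting this from the definition of $V^{\gamma^{n}}_{t}$ yields
\[
V^{\gamma^{n}}_{t}-N^{n}_{t}
=\gamma^{n}_{t}(S^{n}_{t}-\hat M^{n}_{t})
-\int_{0}^{t}(S^{n}_{u}-\hat M^{n}_{u})d\gamma^{n}_{u}
-\kappa|\gamma^{n}_{t}|S^{n}_{t}
-\kappa\int_{0}^{t}S^{n}_{u}|d\gamma^{n}_{u}|,
\]
and the bound $|S^{n}-\hat M^{n}|\le\kappa S^{n}$ makes each of the first two terms dominated by the corresponding transaction--cost term, so $V^{\gamma^{n}}_{t}\le N^{n}_{t}$. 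Since $x+V^{\gamma^{n}}_{t}\ge 0$, the process $x+N^{n}$ is a non--negative local $\hat{\mathbb Q}^{n}$--martingale, hence a $\hat{\mathbb Q}^{n}$--supermartingale, and therefore
\[
\mathbb E_{\hat{\mathbb Q}^{n}}\!\bigl[x+V^{\gamma^{n}}_{T}\bigr]\le x\qquad\text{for every }n\text{ and every }\gamma^{n}\in\mathcal A^{n}(x).
\]

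For the second step, the condition $AE(U)<1$ yields some $\bar\lambda\in(AE(U),1)$ and a constant $C$ with $U^{+}(y)\le C(1+y^{\bar\lambda})$ for all $y\ge 0$. The assumption $q>1/(1-AE(U))$ allows us to choose $\bar\lambda$ so that in addition $\bar\lambda<1-1/q$, and then to pick $r>1$ small enough that $\bar\lambda r q/(q-1)\le 1$. Writing $Z^{n}:=d\mathbb P^{n}/d\hat{\mathbb Q}^{n}$ and $Y^{n}:=x+V^{\gamma^{n}}_{T}$ and applying Hölder with exponents $q$ and $q/(q-1)$,
\[
\mathbb E_{\mathbb P^{n}}\!\bigl[(U^{+}(Y^{n}))^{r}\bigr]
=\mathbb E_{\hat{\mathbb Q}^{n}}\!\bigl[Z^{n}(U^{+}(Y^{n}))^{r}\bigr]
\le\bigl(\mathbb E_{\hat{\mathbb Q}^{n}}[(Z^{n})^{q}]\bigr)^{1/q}\bigl(\mathbb E_{\hat{\mathbb Q}^{n}}[(U^{+}(Y^{n}))^{rq/(q-1)}]\bigr)^{(q-1)/q}.
\]
Using $(U^{+}(Y^{n}))^{rq/(q-1)}\le C'(1+(Y^{n})^{\bar\lambda rq/(q-1)})$ together with Jensen's inequality applied to the concave function $y\mapsto y^{\bar\lambda rq/(q-1)}$ and the $\hat{\mathbb Q}^{n}$--bound $\mathbb E_{\hat{\mathbb Q}^{n}}[Y^{n}]\le x$ from Step~1, the right--hand side is uniformly bounded in $n$ and $\gamma^{n}$ by the hypothesis $\sup_{n}\mathbb E_{\hat{\mathbb Q}^{n}}[(Z^{n})^{q}]<\infty$. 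Hence the family $\{U^{+}(Y^{n})\}$ is bounded in $L^{r}(\mathbb P^{n})$ for some $r>1$, which proves the uniform integrability.

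The main technical obstacle is Step~1: the comparison $V^{\gamma^{n}}\le N^{n}$ must be derived carefully, because $\gamma^{n}$ is only of bounded variation and has potential jumps at $0$ and at $T$, so the integration--by--parts identity and the interpretation of the boundary terms have to be handled exactly as in the definition of $V^{\gamma^{n}}$. The remaining ingredients, namely the asymptotic--elasticity bound and the Hölder argument, are standard once Step~1 is in place.
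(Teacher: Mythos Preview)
Your proof is correct and follows the same underlying strategy as the paper. The paper's own proof is essentially two citations: it invokes Lemma~6.3 in \cite{KS} for the asymptotic-elasticity growth bound and then defers the rest to Proposition~2.7 in \cite{BDD}; your Step~1 (domination of $V^{\gamma^n}$ by the frictionless gain against the consistent price system, hence the uniform $\hat{\mathbb Q}^n$-expectation bound) and Step~2 (H\"older with the moment hypothesis on $d\mathbb P^n/d\hat{\mathbb Q}^n$ and Jensen for the sublinear power) are exactly what that proposition contains. One small point to tighten: you write ``since $\hat M^n$ is continuous,'' but Assumption~\ref{asm2.3-}(ii) only gives $|\hat M^n - S^n|\le\kappa S^n$ with $S^n$ continuous, which does not force $\hat M^n$ to be continuous. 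The integration-by-parts formula then carries an extra covariation term $\sum_{u\le t}\Delta\gamma^n_u\,\Delta\hat M^n_u$, but this is absorbed into $\int_0^t \hat M^n_u\,d\gamma^n_u$ and the inequality $V^{\gamma^n}_t\le N^n_t$ survives unchanged, so your conclusion is unaffected.
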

\begin{proof}
The statement is obvious if $U$ is bounded. Thus, assume that the second condition in Assumption \ref{asm2.3-} holds true.
From Lemma 6.3 in \cite{KS} it follows that there exists a constant $L$ such that
$U(v)\leq L (1+v^{q})$ for all $v$. Hence, the result follows from
Proposition 2.7 in \cite{BDD}.
\end{proof}

\subsection{Meyer--Zheng Topology and Extended Weak Convergence}
Any RCLL function $f\in \mathbb D[0,T]:=\mathbb D([0,T];\mathbb R)$ can be extended to a function
$f:\mathbb R_{+}\rightarrow \mathbb R$ by $f(t):=f(T)$ for all $t\geq T$.
The Meyer--Zheng topology, introduced in \cite{MZ}, is a relative topology, on the
image measures on graphs
$(t,f(t))$
of trajectories $t\rightarrow f(t)$, $t\in\mathbb R_{+}$ under the measure
$\lambda(dt):=e^{-t} dt$
(called pseudo-paths), induced by the weak topology of probability laws on
the compactified space
$[0,\infty]\times \overline{\mathbb R}$.
From Lemma 1 in \cite{MZ}, it follows that the Meyer--Zheng topology on the space $\mathbb D[0,T]$
is given by the
metric
$$d_{MZ}(f,g):=\int_{0}^T \min(1,|f(t)-g(t)|)dt+|f(T)-g(T)|, \ \ f,g\in \mathbb D[0,T].$$
We denote the corresponding space by $\mathbb D_{MZ}[0,T]$.

Next, we formulate
our convergence assumptions.
\begin{asm}\label{asm2.3}
For any $k\in\mathbb N$, let $\mathbb D([0,T];\mathbb R^k)$ be the space
of all RCLL functions
$f:[0,T]\rightarrow\mathbb R^k$ equipped with the Skorokhod topology (for details see \cite{B}).
We assume that there exists $m\in\mathbb N$ and a stochastic processes
$X^{n}:\Omega^n\rightarrow \mathbb D([0,T];\mathbb R^m)$, $n\in\mathbb N$,
$X:\Omega\rightarrow C([0,T];\mathbb R^m)$ (i.e. $X$ is continuous) which satisfy the following:
\begin{itemize}
\item[(i)] The filtrations $(\mathcal F^{n}_t)_{0\leq t\leq T}$, $n\in\mathbb N$ and $(\mathcal F_t)_{0\leq t\leq T}$,
are the usual filtrations (right continuous and completed by the corresponding probability measure) generated
by $X^{n}$, $n\in\mathbb N$ and $X$, respectively. \\
\item[(ii)] We have the weak convergence
\begin{equation*}
\left((S^{n}, X^{n}),\mathbb P^n\right)\Rightarrow \left((S,X),\mathbb P\right) \ \ \mbox{on} \ \ \mathbb D([0,T];\mathbb R^{m+1}).
\end{equation*}
The above relation means that the joint distribution of $(S^{n}, X^{n})$ under $\mathbb P^n$ converge to the joint distribution
of $(S,X)$ under $\mathbb P$.\\
\item[(iii)] We have the extended weak convergence
$(X^{n},\mathbb P^n)\Rrightarrow (X,\mathbb P)$. This means (see \cite{A}) that, for any $k$ and a
continuous bounded function $\psi: \mathbb D([0,T];\mathbb R^m)\rightarrow\mathbb R^k$,
we have
\begin{equation*}
\left((X^{n},Y^{n}),\mathbb P^n\right)\Rightarrow \left((X,Y),\mathbb P\right) \ \ \mbox{on} \ \ \mathbb D([0,T];\mathbb R^{m+k}),
\end{equation*}
where
$$Y^{n}_t:=\mathbb E_{\mathbb P^n}\left[\psi(X^{n})|\mathcal F^{n}_t\right] \ \ \mbox{and} \ \
Y_t:=\mathbb E_{\mathbb P}\left[\psi(X)|\mathcal F_t\right], \ \ t\in [0,T].$$
\end{itemize}
\end{asm}

\subsection{The Main Result}
We are ready to state our limit theorem.
\begin{thm}\label{thm1}
Let $x>0$. Then we have,
\begin{equation}\label{values}
u(x)=\lim_{n\rightarrow\infty} u^n(x).
\end{equation}
Moreover, let
 $\hat\gamma^{n}\in\mathcal{A}^{n}(x)$, $n\in\mathbb N$ be a sequence of asymptotically optimal portfolios, namely
\begin{equation}\label{optimal}
\lim_{n\rightarrow\infty} \left(u^n(x)-\mathbb E_{\mathbb P^n}\left[U\left(x+V^{\hat\gamma^{n}}_T\right)\right]\right)=0.
\end{equation}
Then,
\begin{equation}\label{new}
\left((S^{n},\hat\gamma^{n}),\mathbb P^n\right)\Rightarrow \left((S,\gamma^{opt});\mathbb P\right) \ \ \ \mbox{on} \  \mbox{the} \  \mbox{space} \
 \mathbb D([0,T])\times \mathbb D_{MZ}[0,T],
\end{equation}
where $\gamma^{opt}$ is the unique optimal portfolio for the
 optimization problem (\ref{problem}).
\end{thm}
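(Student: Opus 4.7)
The plan has three parts: (i) convergence of the values $u^n(x)\to u(x)$ together with existence of a maximizer for (\ref{problem}), both of which are already established in \cite{BDD} under the hypotheses collected here; (ii) uniqueness of the optimizer $\gamma^{opt}$; and (iii) a Meyer--Zheng compactness argument that upgrades asymptotic optimality of $\hat\gamma^n$ to the joint weak convergence (\ref{new}). Items (i) and (iii) are essentially soft once (ii) is in place, so the substantive novelty lies in the uniqueness statement, in line with the abstract.

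For uniqueness I would invoke the shadow price technology of \cite{CS,CSY,CPSY}. Under Assumptions \ref{asm1}--\ref{asm2}, those works produce from the dual optimizer a semi-martingale $\tilde S$ taking values in $[(1-\kappa)S,(1+\kappa)S]$ with the property that any primal optimizer $\gamma^{opt}$ of (\ref{problem}) is also optimal for the frictionless utility maximization problem driven by $\tilde S$, while its increasing (respectively decreasing) component charges only times at which $\tilde S=(1+\kappa)S$ (respectively $\tilde S=(1-\kappa)S$). Strict concavity of $U$ then forces the optimal terminal wealth $V^{\gamma^{opt}}_T$ to be $\mathbb{P}$-almost surely unique across admissible strategies. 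Combining this uniqueness of terminal wealth with the frictionless characterisation through $\tilde S$ and with the two-way crossing property, which forbids $\tilde S$ from remaining on one side of the spread over any nondegenerate interval without actually crossing, pins down $\gamma^{opt}$ pathwise up to indistinguishability.

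To complete the proof, given an asymptotically optimal sequence $\hat\gamma^n$, I would first obtain a uniform bound on the expected weighted total variation $\mathbb{E}_{\mathbb{Q}^n}\!\left[\int_0^T S^n_u|d\hat\gamma^n_u|\right]$. The consistent price system $M^n$ from Assumption \ref{asm2.2} combined with admissibility shows that the friction cost $\kappa\varepsilon\int_0^T S^n_u|d\hat\gamma^n_u|$ is absorbed by the initial capital $x$, and the contiguity in part 2) of Assumption \ref{asm2.2} transfers tightness to $\mathbb{P}^n$. Meyer--Zheng compactness for bounded-variation processes, jointly with the Skorokhod tightness of $(S^n,X^n)$ from Assumption \ref{asm2.3}, then yields a subsequence along which
\begin{equation*}
\bigl((S^n,\hat\gamma^n),\mathbb{P}^n\bigr)\Rightarrow\bigl((S,\tilde\gamma),\mathbb{P}\bigr) \quad\text{on}\quad \mathbb{D}([0,T])\times\mathbb{D}_{MZ}[0,T]
\end{equation*}
for some bounded-variation limit $\tilde\gamma$. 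Upper semicontinuity of $\gamma\mapsto V^{\gamma}_T$ under Meyer--Zheng convergence, together with continuity of the limiting $S$ and the uniform integrability of $U^+(x+V^{\hat\gamma^n}_T)$ from Lemma \ref{prop1}, gives $\mathbb{E}_{\mathbb{P}}[U(x+V^{\tilde\gamma}_T)]\ge \limsup_n u^n(x) = u(x)$ by (\ref{values}). Thus $\tilde\gamma$ is optimal for (\ref{problem}), and uniqueness forces $\tilde\gamma=\gamma^{opt}$. Since every subsequence admits a further subsequence converging to the same limit, the full sequence converges, which is (\ref{new}).

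I expect the uniqueness step to be the main obstacle. The dual/shadow-price machinery delivers uniqueness of the terminal wealth rather easily, but upgrading this to uniqueness of the strategy as a process of bounded variation is subtle: one must exclude pairs of optimizers that share the same terminal wealth and the same frictionless behaviour relative to $\tilde S$ yet trade differently along the path. It is precisely the two-way crossing Assumption \ref{asm2} that rules this out, and handling it carefully in the non-semi-martingale setting of \cite{CS,CSY,CPSY} is where the real work of the proof is concentrated.
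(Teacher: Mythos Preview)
Your plan captures the overall architecture correctly, and your description of the uniqueness argument via the shadow price and two-way crossing is close to the paper's Proposition~\ref{lem.unique}. However, part~(iii) contains a genuine gap.

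When you pass to a subsequential Meyer--Zheng limit and obtain $\bigl((S,\tilde\gamma),\mathbb{P}\bigr)$, there is no reason for the limiting process $\tilde\gamma$ to be adapted to the original filtration $\mathcal F$ generated by $X$. Weak convergence only places $\tilde\gamma$ in the enlarged filtration $\mathcal F^{X,\tilde\gamma}$, so you cannot conclude that $\tilde\gamma\in\mathcal A(x)$ or that it is optimal for the original problem (\ref{problem}); consequently the uniqueness result for (\ref{problem}) does not apply directly. This is exactly where the extended weak convergence in Assumption~\ref{asm2.3}(iii) enters, and your proposal never uses it.

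The paper deals with this in two further steps that your outline omits. First, the conditional independence of $\mathcal F^{X,\tilde\gamma}_t$ and $\mathcal F_T$ given $\mathcal F_t$ (delivered by extended weak convergence, Lemma~\ref{lem.tight}) is used to show that the $\mathcal F$-optional projection $\tilde\gamma^{\mathcal F}$ lies in $\mathcal A(x)$ and dominates $\tilde\gamma$ in terminal utility; this yields $u(x)=\hat u(x)$ for the enlarged-filtration problem (\ref{problem1}) and shows that both $\tilde\gamma$ and $\gamma^{opt}$ are optimal for (\ref{problem1}). Second, one must re-prove the uniqueness proposition for the enlarged filtration $\mathcal F^{X,\tilde\gamma}$, which in turn requires verifying that Assumptions~\ref{asm1} and~\ref{asm2} still hold with respect to that larger filtration. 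The latter verification (Step~II of the paper's proof) is not automatic: stickiness transfers via the conditional-law identity $\mathbb P(S\mid\mathcal F_t)=\mathbb P(S\mid\mathcal F^{X,\tilde\gamma}_t)$, while two-way crossing for $\mathcal F^{X,\tilde\gamma}$-stopping times requires a randomization argument reducing to $\mathcal F$-stopping times. Without these two ingredients, the identification $\tilde\gamma=\gamma^{opt}$ does not follow.
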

We finish this section with the following remark.
\begin{rem}
Assumptions \ref{asm2.2}, \ref{asm2.3-}, \ref{asm2.3} are analogues (for the current setup) of similar assumptions in \cite{BDD}
and are needed for the proof of (\ref{values}). This proof follows exactly the lines
of the proof from \cite{BDD}. In order to prove the ``new" result (\ref{new}) we
establish a uniqueness result, that is Proposition \ref{lem.unique}. For the proof of Proposition \ref{lem.unique}
we need to assume
Assumptions \ref{asm1}-\ref{asm2}.
\end{rem}
\section{The Uniqueness Result}\label{sec:uni}
In this section we prove that for a given initial capital,
the problem of utility maximization from terminal wealth has a unique optimal trading strategy.
Although, for strictly concave
utility the uniqueness of the optimal terminal wealth is straightforward,
the uniqueness of the optimal trading strategy is far from obvious and was an open question
for the general setup we consider in the present note
(see Remark 6.9 in \cite{S}). It is important to mention the paper
\cite{DKL} where the authors proved a uniqueness result for consumption-investment problems
in the presence of proportional transaction costs where
the price of the assets is given by a geometric L\'{e}vy process.

\begin{prop}\label{lem.unique}
Let $x>0$, be the initial capital. Then, there exists a unique optimal portfolio $\gamma^{opt}=(\gamma^{opt}_t)_{0\leq t\leq T}$ to the
optimization problem (\ref{problem}).
\end{prop}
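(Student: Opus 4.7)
The plan is to combine the strict concavity of $U$ with the dual (shadow price) approach developed in \cite{CS,CSY,CPSY}, thereby reducing uniqueness of the frictional optimizer to a uniqueness-of-integrand statement against the shadow price.

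First I would exploit the concavity of the map $\gamma\mapsto V^{\gamma}_T$. Since $\gamma_T=0$, formula (\ref{2.0}) reads
$$V^{\gamma}_T=-\int_0^T S_u\,d\gamma_u-\kappa\int_0^T S_u\,|d\gamma_u|,$$
in which the first summand is linear in $\gamma$ and the second is concave (triangle inequality for total variation). If $\gamma^1,\gamma^2\in\mathcal{A}(x)$ are two optimizers and $\bar\gamma=(\gamma^1+\gamma^2)/2$, the same concavity preserves admissibility and gives $V^{\bar\gamma}_t\geq(V^{\gamma^1}_t+V^{\gamma^2}_t)/2$ pointwise in $t$. Strict concavity of $U$ then forces $V^{\gamma^1}_T=V^{\gamma^2}_T$ a.s.\ and, in turn, the structural identity
$$\int_0^T S_u\,|d(\gamma^1+\gamma^2)_u|=\int_0^T S_u\,|d\gamma^1_u|+\int_0^T S_u\,|d\gamma^2_u|.$$
Because $S>0$, this means the signed measures $d\gamma^1,d\gamma^2$ never point in opposite directions: the positive part of one is mutually singular with the negative part of the other.

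Second, under Assumptions \ref{asm1}--\ref{asm2}, the results of \cite{CS,CSY,CPSY} provide a continuous semimartingale shadow price $\tilde{S}$ with $(1-\kappa)S\leq\tilde{S}\leq(1+\kappa)S$ such that every frictional optimizer $\gamma$ is also optimal for the frictionless utility maximization driven by $\tilde{S}$, the frictional liquidation value coincides with the frictionless gain $\int_0^{\cdot}\gamma_u\,d\tilde{S}_u$, and $\gamma$ may only increase on $\{\tilde{S}=(1+\kappa)S\}$ and only decrease on $\{\tilde{S}=(1-\kappa)S\}$. Applying this to both $\gamma^1$ and $\gamma^2$, and using that the frictionless optimal wealth process is unique by Kramkov--Schachermayer duality, I would obtain
$$\int_0^t(\gamma^1_u-\gamma^2_u)\,d\tilde{S}_u=0,\qquad t\in[0,T],\ \mathbb{P}\text{-a.s.}$$
Assumption \ref{asm2} (TWC) is then invoked to ensure that the continuous martingale part of $\tilde{S}$ has strictly increasing quadratic variation, whence uniqueness of stochastic integrals gives $\gamma^1=\gamma^2$, $d[\tilde{S}]\otimes d\mathbb{P}$-a.e. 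Combined with the RCLL paths of both strategies and the no-cancellation identity from the first step, a density argument upgrades this to $\gamma^1\equiv\gamma^2$ as processes.

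The main obstacle is this final passage from equality of stochastic integrals to pathwise equality of integrands. The shadow price $\tilde{S}$ is only an abstract semimartingale with no a priori quantitative lower bound on $[\tilde{S}]$; stickiness (Assumption \ref{asm1}) is needed to prevent $\tilde{S}$ from ``freezing'' at either boundary, while two-way crossing (Assumption \ref{asm2}) forces excursions away from each boundary to cluster at every point of time, so that $d[\tilde{S}]$-a.e.\ equality combined with the no-cancellation structure and right-continuity of $\gamma^i$ yields identity everywhere. Managing this interplay between the analytic structure of $\tilde{S}$ and the combinatorial constraint on where $\gamma^1,\gamma^2$ may move is, in my view, where the real work lies.
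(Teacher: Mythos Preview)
Your setup is right and matches the paper through the point where both optimizers $\gamma^1,\gamma^2$ solve the frictionless problem for the shadow price $\hat S$ and $\int_0^t(\gamma^1_{u-}-\gamma^2_{u-})\,d\hat S_u=0$ for all $t$. Two issues follow.

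First, a minor one: Kramkov--Schachermayer gives uniqueness of the optimal \emph{terminal} wealth, not of the whole wealth process. The paper proves the pathwise identity $\int_0^u\gamma^1_{t-}\,d\hat S_t=\int_0^u\gamma^2_{t-}\,d\hat S_t$ for all $u$ directly, by a switching construction (follow $\gamma^1$ up to a stopping time $\Theta$, then $\gamma^2$ afterwards) that would strictly beat the optimizer if the two gain processes ever separated. You should either cite a precise reference for process-level uniqueness or reproduce this short argument.

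Second, the real gap: your proposed endgame---``TWC forces the continuous martingale part of $\hat S$ to have strictly increasing quadratic variation, hence $\gamma^1=\gamma^2$ $d[\hat S]$-a.e., then upgrade''---is not how the argument closes, and the premise is neither proved nor, as far as I can see, available. TWC is a hypothesis on $S$, not on $\hat S$, and nothing in the shadow-price literature guarantees that $[\hat S]$ is strictly increasing. The paper avoids this entirely. From the integral identity and $|\gamma^1_{t-}-\gamma^2_{t-}|\geq\epsilon/2$ on $(\sigma,\tau]$ (where $\sigma$ is the first time the two strategies differ by more than $\epsilon$ and $\tau$ the first later time the gap drops below $\epsilon/2$), associativity of the stochastic integral gives that $\hat S$ is \emph{constant} on $[\sigma,\tau]$. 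TWC on $S$ is then used only to rule out $\hat S_\sigma=(1\pm\kappa)S_\sigma$: if $\hat S$ sits on a boundary and stays constant while $S$ immediately crosses $S_\sigma$ in both directions, $\hat S$ leaves the bid--ask corridor. Hence $\hat S_\sigma$ is strictly interior. Stickiness of $S$ (Assumption~\ref{asm1}) then produces an event $B$ of positive probability on which $\hat S_t=\hat S_\sigma$ remains strictly between $(1-\kappa)S_t$ and $(1+\kappa)S_t$ for all $t\in[\sigma,\tau]$. On $B$ this is a no-trade region---precisely the characterization you cite but do not exploit---so both $\gamma^1$ and $\gamma^2$ are constant on $[\sigma,\tau]$. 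This contradicts $|\gamma^1_\sigma-\gamma^2_\sigma|\geq\epsilon$ together with $|\gamma^1_\tau-\gamma^2_\tau|\leq\epsilon/2$. Your no-cancellation identity from the first step plays no role.
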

\begin{proof}
From Theorem 2.3 in \cite{CPSY}, there exists a semi--martingale $\hat S\in [(1-\kappa)S,(1+\kappa)S]$ and
 $\gamma^{1}\in\mathcal A(x)$ such that
$\gamma^{1}$ is a solution to (\ref{problem}) and $(\gamma^{1}_{t-})_{0\leq t\leq T}$ is a solution to the frictionless problem
\begin{equation}\label{1}
\mathbb E_{\mathbb P}\left[U\left(x+\int_{0}^T\gamma^1_{t-} d\hat S_t\right)\right]=\sup_{\theta}\mathbb E_{\mathbb P}\left[U\left(x+\int_{0}^T\theta_{t} d\hat S_t\right)\right]
\end{equation}
where the supremum is taken over all $\hat S$--integrable predictable processes $\theta=(\theta_t)_{0\leq t\leq T}$ which satisfy
the admissibility condition $x+\int_{0}^u \theta_t d\hat{ {S}_t} \geq 0$ for all $u\in [0,T]$.
Moreover, we have
 \begin{equation}\label{2}
\int_{0}^T\gamma^1_{t-} d\hat S_t=V^{\gamma^1}_T.
 \end{equation}
\textbf{Step I:}
Assume by contradiction that there exists an optimal solution $\gamma^{2}\neq\gamma^{1}$ which solves the utility maximization problem
(\ref{problem}).
First, let us notice that $V^{\gamma^1}_T=V^{\gamma^2}_T$. Indeed, if by contraction
there is no equality then from the fact that $U$ is increasing and strictly concave we obtain that for the strategy $\gamma:=(\gamma^1+\gamma^2)/2$
$$\mathbb E_{\mathbb P}\left[U\left(x+V^{\gamma}_T\right)\right]>\frac{\mathbb E_{\mathbb P}\left[U\left(x+V^{\gamma^1}_T\right)\right]+\mathbb E_{\mathbb P}\left[U\left(x+V^{\gamma^2}_T\right)\right]}{2}$$
which contradicts the optimality of $\gamma^{1},\gamma^{2}$.
Thus, from (\ref{2}) and the fact that $\hat S\in [(1-\kappa)S,(1+\kappa)S]$ it follows that
$$\int_{0}^T\gamma^1_{t-} d\hat S_t=V^{\gamma^1}_T=V^{\gamma^2}_T\leq \int_{0}^T\gamma^2_{t-} d\hat S_t.$$
We conclude that
\begin{equation}\label{3}
\int_{0}^T\gamma^1_{t-} d\hat S_t=\int_{0}^T\gamma^2_{t-} d\hat S_t.
\end{equation}
Let us prove that
\begin{equation}\label{4}
\int_{0}^u\gamma^1_{t-} d\hat S_t=\int_{0}^u\gamma^2_{t-} d\hat S_t, \ \ \forall u\in [0,T].
\end{equation}
Assume by contradiction that (\ref{4}) does not hold.
Then, without loss of generality we can assume that there exist $\epsilon>0$, a stopping time
$\Theta\leq T$ and an event of positive probability $A\in\mathcal F_{\Theta}$ such that
\begin{equation}\label{4+}
\int_{0}^{\Theta}\gamma^1_{t-} d\hat S_t-\int_{0}^{\Theta}\gamma^2_{t-} d\hat S_t>\epsilon \ \ \ \ \mbox{on} \ \mbox{the} \ \mbox{event} \ A.
\end{equation}
Define a strategy $(\gamma^3_t)_{0\leq t\leq T}$ by
$$\gamma^3_t:=\gamma^1_t \ \  \mbox{for} \ \  t<\Theta$$
and
$$\gamma^3_t:=(1-\mathbb I_A)\gamma^1_t+\mathbb I_A\gamma^2_t \ \ \mbox{for} \ \ t\geq\Theta.$$
From (\ref{4+}) and the relation
$\hat S\in [(1-\kappa)S,(1+\kappa)S]$
it follows that for any $u\in [0,T]$
\begin{eqnarray*}
&\int_{0}^u\gamma^3_{t-} d\hat S_t\\
&=\left(1-\mathbb I_{A}\mathbb I_{u>\Theta}\right)\int_{0}^u\gamma^1_{t-} d\hat S_t+\mathbb I_{A}\mathbb I_{u>\Theta}
\left(\int_{0}^\Theta\gamma^1_{t-} d\hat S_t+
\int_{\Theta}^u\gamma^2_{t-} d\hat S_t\right)\\
&\geq\left(1-\mathbb I_{A}\mathbb I_{u>\Theta}\right) \int_{0}^u\gamma^1_{t-} d\hat S_t+\mathbb I_{A}\mathbb I_{u>\Theta} \int_{0}^u\gamma^2_{t-} d\hat S_t\\
&\geq\left(1-\mathbb I_{A}\mathbb I_{u>\Theta}\right) V^{\gamma^1}_u+\mathbb I_{A}\mathbb I_{u>\Theta} V^{\gamma^2}_u.
\end{eqnarray*}
Thus, $\gamma^3$ satisfies the
admissibility condition $x+\int_{0}^u \gamma^3_{t-} d\hat{ {S}_t} \geq 0$ for all $u\in [0,T]$. Moreover,
for $u=T$, by applying (\ref{3}) we obtain
\begin{eqnarray*}
&\int_{0}^T\gamma^3_{t-} d\hat S_t\\
&=\left(1-\mathbb I_A\right)\int_{0}^T\gamma^1_{t-} d\hat S_t+\mathbb I_{A}
\left(\int_{0}^\Theta\gamma^1_{t-} d\hat S_t+
\int_{\Theta}^T\gamma^2_{t-} d\hat S_t\right)\\
&\geq \int_{0}^T\gamma^1_{t-} d\hat S_t+\epsilon\mathbb{I}_A.
\end{eqnarray*}
This contradicts the fact that $\gamma^{1}$ is the optimal solution for (\ref{1}) and so, (\ref{4}) follows. \\
${}$\\
\textbf{Step II:}
Since by contradiction $\gamma^1\neq \gamma^2$, there exists $\epsilon>0$ such that the stopping time
$$\sigma=\sigma(\epsilon):=T\wedge\inf\{t:|\gamma^{1}_t-\gamma^{2}_t|>\epsilon\}$$
satisfies
\begin{equation}\label{5}
\mathbb P(\sigma<T)>0.
\end{equation}
Next, define the stopping time
$$\tau:=\inf\{t> \sigma: |\gamma^{1}_t-\gamma^{2}_t|<\epsilon/2\}.$$ Observe that
$\gamma^{1}_T=\gamma^{2}_T=0$ implies $\tau\leq T$ a.s. on the event $\sigma<T$.
Clearly, on the interval $(\sigma,\tau]$ we have $|\gamma^{1}_{t-}-\gamma^{2}_{t-}|\geq\frac{\epsilon}{2}$ and so from
the associativity of the stochastic integral (see Section 2 in \cite{P1})
and (\ref{4}) we conclude that for any $t\in (\sigma,\tau]$
$$\hat S_{\tau}-\hat S_{t}=\int_{t}^{\tau} \frac{1}{\gamma^{1}_{u-}-\gamma^{2}_{u-}} d\left(\int_{0}^u \left(\gamma^1_{v-}-\gamma^2_{v-}\right)d\hat S_v\right)=0.$$
Thus, ($\hat S$ is right continuous)
$\hat S$ is constant on $[\sigma,\tau]$. Since $\hat S\in [(1-\kappa )S,(1+\kappa)S]$ then from Assumption \ref{asm2}
we get $\hat S_{\sigma}\in \left ((1-\kappa)S_{\sigma},(1+\kappa)S_{\sigma})\right)$ (i.e. the shadow price is strictly between the bid price and the ask price).
 From Assumption \ref{asm1} and (\ref{5}), it follows that for the event
 $$B:=\left\{(1-\kappa)S_t<\hat S_{\sigma}<(1+\kappa)S_{t}, \ \ \forall t\in [\sigma,\tau]\right\}$$
 we have $\mathbb P\left(B\cap \{\sigma<T\}\right)>0$.
Finally, since $\hat S$ is constant on the interval $[\sigma,\tau]$,
we observe that on the event $B\cap \{\sigma<T\}$ the interval $[\sigma,\tau]$ is a no-trading region for any solution of (\ref{problem}) (see
Theorem 3.5 in \cite{CS} and Remark 2.13 in \cite{CSY}).
Hence on the event $B\cap \{\sigma<T\}$, $\gamma^{1}_{[\sigma,\tau]}$ and $\gamma^{2}_{[\sigma,\tau]}$ are (random) constants.
In particular
\begin{equation}\label{2.lep}
\gamma^{1}_{\sigma}-\gamma^2_{\sigma}=\gamma^{1}_{\tau}-\gamma^2_{\tau} \ \ \mbox{on} \ \  B\cap \{\sigma<T\}.
\end{equation}
On the other hand, from the definition of $\sigma,\tau$ and the right continuity
of $\gamma^{1},\gamma^{2}$ it follows that
$$|\gamma^{1}_{\sigma}-\gamma^2_{\sigma}|\geq \epsilon \ \ \mbox{and} \ \
|\gamma^{1}_{\tau}-\gamma^2_{\tau}|\leq \epsilon/2 \ \ \mbox{on} \ \ \{\sigma<T\}
$$
which is a contradiction to (\ref{2.lep}).
\end{proof}
\section{Proof of Theorem \ref{thm1} }\label{sec:2}
We start with the following lower semi--continuity result.
\begin{lem}\label{lem2.1}
For any $x>0$ we have $$u(x)\leq\lim\inf_{n\rightarrow \infty} u^n(x).$$
\end{lem}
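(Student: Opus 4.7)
The plan is to prove this inequality by the standard Fatou-type transfer argument: take a near-optimal $\gamma\in\mathcal A(x)$ for the limit model, represent it as a continuous functional of the driving process $X$, port it to the $n$-th model via $X^n$, and pass to the limit using the uniform integrability provided by Lemma \ref{prop1}. This follows the same scheme as in \cite{BDD}.

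Fix $\varepsilon>0$ and choose $\gamma\in\mathcal A(x)$ with $\mathbb E_{\mathbb P}[U(x+V^{\gamma}_T)]\geq u(x)-\varepsilon$. The regularization step replaces $\gamma$ by $\tilde\gamma$ of the form $\tilde\gamma_t=F_t(X)$, where $F:[0,T]\times C([0,T];\mathbb R^m)\to\mathbb R$ is a bounded, adapted, sup-norm continuous functional, satisfying $\tilde\gamma_T=0$, the strict admissibility cushion $x+V^{\tilde\gamma}_t\geq\delta$ for some $\delta>0$, and $\mathbb E_{\mathbb P}[U(x+V^{\tilde\gamma}_T)]\geq u(x)-2\varepsilon$. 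Because the filtration $\mathcal F$ is the usual one generated by $X$, such approximations are obtained by a combination of time-mollification of $\gamma$ and density of continuous-functional strategies (here the extended weak convergence in Assumption \ref{asm2.3}(iii) is implicitly used to ensure conditional expectations given $\mathcal F_t$ can be approximated by continuous functionals of the path of $X$ up to time $t$); the admissibility cushion is obtained by replacing $\tilde\gamma$ with $(1-\eta)\tilde\gamma$ for small $\eta>0$ and invoking continuity of $U$. I then define $\gamma^n_t:=F_t(X^n)$ on $\Omega^n$; the cushion ensures $\gamma^n\in\mathcal A^n(x)$ for all sufficiently large $n$.

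By Skorokhod's representation theorem applied to the weak convergence $(S^n,X^n)\Rightarrow (S,X)$ in Assumption \ref{asm2.3}(ii) (and using continuity of the limit $X$, which upgrades Skorokhod convergence to uniform convergence on $[0,T]$), one may work on a common probability space where $(S^n,X^n)\to(S,X)$ almost surely and uniformly. Continuity of $F$ then yields $\gamma^n\to\tilde\gamma$ uniformly almost surely, and a standard Stieltjes/integration-by-parts argument gives $V^{\gamma^n}_T\to V^{\tilde\gamma}_T$ almost surely. Lemma \ref{prop1} provides uniform integrability of $\{U^+(x+V^{\gamma^n}_T)\}_n$, while the cushion $x+V^{\gamma^n}_T\geq\delta/2$ (for $n$ large) gives the uniform lower bound $U(x+V^{\gamma^n}_T)\geq U(\delta/2)>-\infty$. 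Combining these with the almost sure convergence and Fatou's lemma yields
$$\liminf_{n\to\infty}u^n(x)\geq \liminf_{n\to\infty}\mathbb E_{\mathbb P^n}[U(x+V^{\gamma^n}_T)]\geq \mathbb E_{\mathbb P}[U(x+V^{\tilde\gamma}_T)]\geq u(x)-2\varepsilon,$$
and sending $\varepsilon\to 0$ finishes the proof.

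The main technical obstacle is the regularization step: producing a continuous-functional representation $\tilde\gamma_t=F_t(X)$ of a near-optimal strategy while preserving both near-optimality of the expected utility and admissibility. This relies delicately on Assumption \ref{asm2.3}(i) (the filtration is generated by $X$) together with (iii) (extended weak convergence) to guarantee that such functional approximations exist and transfer stably from $X$ to $X^n$; once the regularization is in place, the remaining steps are standard.
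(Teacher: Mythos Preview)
Your sketch is essentially the same approach as the paper's, which simply refers to Lemma~4.2 in \cite{BDD}: approximate a near-optimal $\gamma$ by a continuous adapted functional of $X$, transport it to the $n$-th model through $X^n$, and pass to the limit. Two small remarks: (i) the paper singles out continuity of the \emph{value function} $u$ (from \cite{CS}) as the key ingredient for the regularization/cushion step---this is what lets one replace $x$ by a slightly smaller initial capital and still be $\varepsilon$-optimal, which is cleaner than your ``scale by $(1-\eta)$ and invoke continuity of $U$'' since the latter still needs an integrability argument to control $\mathbb E[U(x+(1-\eta)V^\gamma_T)]$ as $\eta\to0$; (ii) for this lower-semicontinuity direction you do not actually need Lemma~\ref{prop1}: once the cushion gives $U(x+V^{\gamma^n}_T)\geq U(\delta/2)$, Fatou alone yields the $\liminf$ inequality, so the appeal to uniform integrability of $U^+$ is superfluous here (it is used for the opposite inequality later in the paper).
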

\begin{proof}
The proof is done by using the same approximating arguments as in Lemma 4.2 in \cite{BDD}.
Observe that since our utility function is not state dependent, then Assumption 2.5(i) in \cite{BDD} is trivially satisfied.
Moreover the continuity of $u$ which is essential for the proof (and was established in Lemma 4.1 in \cite{BDD})
is a well known fact for the current setup (see Theorem  3.2 in \cite{CS}).
\end{proof}
Next, we have the following result.
\begin{lem}\label{lem.tight}
Let $x>0$ and
$\gamma^{(n)}\in\mathcal A^{n}(x)$, $n\in\mathbb N$
be a sequence of admissible trading strategies.
The sequence
$\left((X^{n},S^{n},\gamma^{n}),\mathbb P^n\right)$ is tight on the space
 $\mathbb D([0,T];\mathbb R^{m+1})\times \mathbb D_{MZ}[0,T]$ and so from Prohorov's theorem (see \cite{B})
 it is relatively compact. Moreover, any cluster point is of the form $\left((X,S,\hat\gamma),\mathbb P\right)$
 and satisfies the following conditional independence property:

 Let $(\mathcal F^{X,\hat\gamma}_t)_{0\leq t\leq T}$
 be the usual filtration (right continuous and $\mathbb P$-completed) generated by $X$
 and $\hat\gamma$. Then, for any $t<T$,
 $\mathcal F^{X,\hat\gamma}_t$ and $\mathcal F_T$ are conditionally independent given
 $\mathcal F_t$. As before $\mathcal F:=(\mathcal F_t)_{0\leq t\leq T}$ is the usual filtration
 generated by $X$.
\end{lem}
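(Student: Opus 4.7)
The plan is to split the proof into tightness and the conditional-independence identity.

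For the tightness assertion, $(X^n, S^n)$ is already tight on $\mathbb{D}([0,T];\mathbb{R}^{m+1})$ by the weak convergence in Assumption \ref{asm2.3}(ii). The core step is Meyer--Zheng tightness of $\gamma^n$, for which I would use the consistent price system $M^n$ from Assumption \ref{asm2.2}(1). Define the shadow frictionless wealth
$$N^n_t := \gamma^n_t M^n_t - \int_0^t M^n_{u-}\, d\gamma^n_u,$$
and note that a standard integration by parts identifies $N^n$ as a $\mathbb{Q}^n$-local martingale. Rewriting the liquidation value in terms of $M^n$ and using the bound $|M^n - S^n| \leq (\kappa - \varepsilon) S^n$, the admissibility condition $V^{\gamma^n}_t \geq -x$ yields
$$\varepsilon\,|\gamma^n_t|\,S^n_t + \varepsilon \int_0^t S^n_u\, |d\gamma^n_u| \leq x + N^n_t,$$
so $N^n \geq -x$ and $N^n$ is in fact a true $\mathbb{Q}^n$-supermartingale. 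Taking $\mathbb{Q}^n$-expectation and using $\gamma^n_T = 0$ gives a uniform-in-$n$ bound on $\mathbb{E}_{\mathbb{Q}^n}[\int_0^T S^n_u|d\gamma^n_u|]$, while the non-negative supermartingale $x + N^n$ supplies tightness of $|\gamma^n_t|$ at each fixed time. These are the classical ingredients for Meyer--Zheng tightness \cite{MZ} of $\gamma^n$ under $\mathbb{Q}^n$, and Assumption \ref{asm2.2}(2) (contiguity) then transfers the tightness to $\mathbb{P}^n$.

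By Prohorov's theorem, any subsequence admits a further subsequence along which $((X^n,S^n,\gamma^n),\mathbb{P}^n) \Rightarrow ((X,S,\hat\gamma),\mathbb{P})$, with the marginal law of $(X,S)$ forced by Assumption \ref{asm2.3}(ii). To establish the conditional-independence property, I would fix a bounded continuous $\psi:\mathbb{D}([0,T];\mathbb{R}^m)\to\mathbb{R}$ and set $Y^n_t := \mathbb{E}_{\mathbb{P}^n}[\psi(X^n)\mid \mathcal{F}^n_t]$, $Y_t := \mathbb{E}_{\mathbb{P}}[\psi(X)\mid\mathcal{F}_t]$. Assumption \ref{asm2.3}(iii) combined with the just-established Meyer--Zheng tightness of $\gamma^n$ yields joint convergence
$$\big((X^n, Y^n, S^n, \gamma^n),\mathbb{P}^n\big) \Rightarrow \big((X, Y, S, \hat\gamma),\mathbb{P}\big)$$
along a further subsequence. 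Since $\gamma^n$ is $\mathcal{F}^n$-adapted, for any bounded continuous $\phi$ depending only on $(X^n, \gamma^n)$ restricted to $[0,t]$,
$$\mathbb{E}_{\mathbb{P}^n}\big[\psi(X^n)\,\phi\big] = \mathbb{E}_{\mathbb{P}^n}\big[Y^n_t\,\phi\big].$$
Passing to the limit along the subsequence and then invoking a monotone class argument would give $\mathbb{E}_{\mathbb{P}}[\psi(X)\mid\mathcal{F}^{X,\hat\gamma}_t] = \mathbb{E}_{\mathbb{P}}[\psi(X)\mid\mathcal{F}_t]$, which is precisely the claimed conditional independence of $\mathcal{F}^{X,\hat\gamma}_t$ and $\mathcal{F}_T$ given $\mathcal{F}_t$.

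The main obstacle I anticipate is the limit passage in this last identity: the Meyer--Zheng topology does not make the evaluation maps $\gamma \mapsto \gamma_s$ continuous, so I cannot simply test with functionals of point values of $\gamma^n$. The standard remedy is to restrict attention to $\phi$ of the form $F\big(X|_{[0,t]},\,\int_0^T \eta(s)\gamma_s\,ds\big)$ with $\eta$ a smooth density supported in $[0,t]$; such $\phi$ are Meyer--Zheng-continuous and, as $\eta$ and $F$ vary over a sufficiently rich family, generate $\mathcal{F}^{X,\hat\gamma}_t$ (once right-continuity of the filtration is invoked), which is exactly what the monotone class step requires.
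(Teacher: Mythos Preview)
Your proposal is correct and follows essentially the same route as the paper, which simply refers to Lemma~4.3 in \cite{BDD}: the consistent price systems from Assumption~\ref{asm2.2} yield the uniform $\mathbb{Q}^n$-bound on $\int_0^T S^n_u\,|d\gamma^n_u|$ needed for Meyer--Zheng tightness (transferred to $\mathbb{P}^n$ by contiguity), and the extended weak convergence in Assumption~\ref{asm2.3} is used exactly as you describe to pass the martingale identity $\mathbb{E}[\psi(X^n)\phi]=\mathbb{E}[Y^n_t\phi]$ to the limit and obtain the conditional-independence property. The technical point you flag about testing with Meyer--Zheng-continuous functionals of $\gamma$ (time-averages rather than point evaluations) is precisely the device used in \cite{BDD}.
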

\begin{proof}
The proof is the same as the proof of Lemma 4.3 in \cite{BDD} and it is based on Assumption \ref{asm2.2}
(Assumption 2.3 in \cite{BDD}) and
the extended weak convergence Assumption \ref{asm2.3} (Assumption 2.9 in \cite{BDD}).
\end{proof}
Now, we are ready to prove Theorem \ref{thm1}.
\begin{proof}
Let $x>0$ and let $\hat\gamma^{n}\in\mathcal{A}^{n}(x)$, $n\in\mathbb N$ be a sequence of portfolios
 which satisfy (\ref{optimal}).
 By passing to a subsequence, we assume without loss of generality that
 $\lim_{n\rightarrow\infty}u^n(x)$
 exists.
${}$\\
\textbf{Step I:}
From Proposition \ref{lem.unique}, there exists a unique solution to (\ref{problem}), denote it by
$\gamma^{opt}$.
From the tightness of the sequence
$\left((S^{n},X^{n},\hat\gamma^{n}),\mathbb P^n\right)$,
 $n\in\mathbb N$ (Lemma \ref{lem.tight}), it follows that
  in order to prove (\ref{new}) it is sufficient to show that the only cluster point of this sequence is
 $(S,X,\gamma^{opt})$.

From Lemma \ref{lem.tight}, any cluster point is of the form $\left((X,S,\hat\gamma),\mathbb P\right)$
where $\hat\gamma$ satisfies the conditional independence
 property which is formulated in this lemma. Let $\hat{\mathcal A}(x)$ be the set of all $(\mathcal F^{X,\hat\gamma}_t)_{0\leq t\leq T}$--adapted
processes
$\gamma=(\gamma_t)_{0\leq t\leq T}$
of
bounded variation with right continuous paths which satisfy
$\gamma_T=0$ and
$x+V^{\gamma}_t\geq 0$, for all $t$. The term $V^{\gamma}$ is defined as in (\ref{2.0}). Introduce the optimization problem
\begin{equation}\label{problem1}
\hat u(x):=\sup_{\gamma\in\hat{\mathcal A}(x)}\mathbb E_{\mathbb P}[U(x+V^{\gamma}_T)].
\end{equation}

By exploiting the uniform integrability result given by Lemma \ref{prop1} (this is Assumption 2.5(ii) in \cite{BDD}),
and applying the same arguments as
in Section 4.2 in \cite{BDD}, we obtain that $\hat\gamma\in\hat{\mathcal A}(x)$
and satisfies
\begin{equation}\label{2.24}
\mathbb E_{\mathbb P}[U(x+V^{\hat\gamma}_T)]\geq \lim_{n\rightarrow\infty} u^n(x).
\end{equation}
Moreover, applying the Jensen inequality and the conditional independence property given by Lemma \ref{lem.tight} (Lemma 4.3 in \cite{BDD})
in the same way as in Section 4.2 in \cite{BDD}, we obtain, for any $\gamma\in\hat{\mathcal A}(x)$, that
\begin{equation}\label{2.53}
V^{\gamma^{\mathcal F}}_t\geq \mathbb E_{\mathbb P}[V^{{\gamma}}_t|\mathcal F_T], \ \ \forall t\in [0,T],
\end{equation}
where $\gamma^{\mathcal F}$
denotes the optional projection of the process $\gamma$ with respect to $\mathcal F$ and it is well
defined. In particular, (\ref{2.53}) implies that $\gamma^{\mathcal F}\in\mathcal A(x)$.

Thus, from the Jensen inequality (for the concave function $U$), (\ref{2.53}) and the trivial relation $\mathcal A(x)\subseteq\hat{\mathcal A}(x)$, we get
\begin{equation}\label{1*}
u(x)\geq \sup_{\gamma\in\hat{\mathcal A}(x)} \mathbb E_{\mathbb P}\left[U\left(x+V^{{\gamma}^{\mathcal F}}_T\right)\right]\geq
 \sup_{\gamma\in\hat{\mathcal A}(x)}\mathcal \mathbb E_{\mathbb P}\left[U\left(x+V^{{\gamma}}_T\right)\right]=\hat u(x)\geq u(x).
\end{equation}
By applying the Jensen inequality,
Lemma \ref{lem2.1} and
(\ref{2.24})--(\ref{2.53}) it follows that
\begin{equation}\label{2*}
u(x)\geq \mathbb E_{\mathbb P}\left[U\left(x+V^{{\hat\gamma}^{\mathcal F}}_T\right)\right]
\geq
\mathbb E_{\mathbb P}\left[U\left(x+V^{\hat\gamma}_T\right)\right]\geq \lim_{n\rightarrow\infty} u^n(x)\geq u(x).
\end{equation}
From (\ref{1*})--(\ref{2*}) we get (\ref{values}) and we conclude that
$\hat\gamma,\gamma^{opt}\in \hat{\mathcal A}(x)$ are optimal portfolios for the
 optimization problem (\ref{problem1}).
 Thus in order complete the proof it remains to argue that the uniqueness result Proposition \ref{lem.unique} holds true where the filtration
$\mathcal F$ is replaced with the filtration $\mathcal F^{X,\hat\gamma}$. For that end it remains to prove that Assumptions \ref{asm1}--\ref{asm2}
hold true with respect to the filtration $\mathcal F^{X,\hat\gamma}$. This brings us to the second step.\\
\textbf{Step II:}
We start with Assumption \ref{asm1}. From Lemma 3.1 in \cite{BPS} it follows that we can restrict $\tau$ in Assumption
\ref{asm1} to be deterministic and the Assumption will remains the same. From \cite{DM} (see Chapter 2, Theorem 45)
and
the conditional independence property given by
Lemma \ref{lem.tight} it follows that for any $t$,
$$\mathbb P(S|\mathcal F_t)=\mathbb P(S|\mathcal F^{X,\hat\gamma}_t)$$
and so
Assumption \ref{asm1} holds true with respect to $\mathcal F^{X,\hat\gamma}$.

Next, we treat Assumption \ref{asm2}.
Assume by contradiction that the Assumption does not hold.
Then, there exists a stopping time with respect to $\mathcal F^{X,\hat\gamma}$,
$\sigma\leq T$ and $\epsilon>0$ such that (without loss of generality we choose the positive direction)
\begin{equation}\label{contrad}
\mathbb P\left(S_t-S_{\sigma}\geq 0 \ \ \forall t\in [\sigma,\sigma+\epsilon]\right)>0.
\end{equation}
By enlarging the underlying probability space we assume (without loss of generality) that there exists a random variable $U$ which is
uniformly distributed on the interval $[0,1]$ and is
independent of
$\mathcal F$. Consider the process
$$Z_t:=\mathbb P(\sigma\leq t|\mathcal F_T), \ \ t\in [0,T].$$
Clearly, $Z$ is a right continuous increasing process which satisfies $Z_T=1$.
Introduce the random time
$$\tau:=\inf\{t:Z_t\geq U\}.$$ Observe that $Z_T=1$ implies that $\tau\leq T$.
Moreover, for any $t\in [0,T]$
$$
\mathbb P(\tau\leq t|\mathcal F_T)=\mathbb P(Z_t\geq U|\mathcal F_T)=Z_t=\mathbb P(\sigma\leq t|\mathcal F_T).
$$
We conclude,
\begin{equation}\label{dist}
((S,\sigma);\mathbb P)=((S,\tau);\mathbb P).
\end{equation}

Next, for any $u\in [0,1]$ define the random time
$\tau_u:=\inf\{t:Z_t\geq u\}$.
From the conditional independence property given by Lemma \ref{lem.tight}, it follows that
$$Z_t=\mathbb P(\sigma\leq t|\mathcal F_t), \ \ \forall t\in [0,T].$$
Hence,
for any $u\in [0,1]$, $\tau_u$
is a stopping time with respect to the filtration $\mathcal F$.
From (\ref{dist}) and the fact that $U$ is independent of $S$ it follows that
\begin{eqnarray*}
&\mathbb P\left(S_t-S_{\sigma}\geq 0 \ \ \forall t\in [\sigma,\sigma+\epsilon]\right)\\
&\mathbb P\left(S_t-S_{\tau}\geq 0 \ \ \forall t\in [\tau,\tau+\epsilon]\right)\\
&=\int_{0}^1 \mathbb P\left(S_t-S_{\tau_u}\geq 0 \ \ \forall t\in [\tau_u,\tau_u+\epsilon]\right)du=0
\end{eqnarray*}
where the last equality follows from Assumption \ref{asm2} (for the filtration $\mathcal F$).
We obtain a contradiction to (\ref{contrad}), which
completes the proof.
\end{proof}

\bibliographystyle{spbasic}

\end{document}